\DeclareMathOperator*{\argmax}{argmax} 
\newtheorem{definition}{Definition}
\newtheorem{theorem}{Theorem}
\newtheorem{lemma}{Lemma}
\newtheorem{remark}{Remark}
\newtheorem{assumption}{Assumption}
\newtheorem{proposition}{Proposition}
\def\BibTeX{{\rm B\kern-.05em{\sc i\kern-.025em b}\kern-.08em
    T\kern-.1667em\lower.7ex\hbox{E}\kern-.125emX}}
\begin{document}
\title{A Tractable Truthful Profit Maximization Mechanism Design with Autonomous Agents}
\author{Mina Montazeri, Hamed Kebriaei, \IEEEmembership{Senior Member, IEEE,} and Babak N. Araabi
\thanks{Mina Montazeri, Hamed Kebriaei and Babak N. Araabi are with the School of ECE, College of Engineering, University of Tehran, Tehran, Iran. Emails: \{mina.montazeri@ut.ac.ir, kebriaei@ut.ac.ir, araabi@ut.ac.ir\}. }}

\maketitle

\begin{abstract}
Task allocation is a crucial process in modern systems, but it is often challenged by incomplete information about the utilities of participating agents. 
{
In this paper, we propose a new profit maximization mechanism for the task allocation problem, where the {task publisher} seeks an optimal incentive function to maximize its own profit and simultaneously ensure the truthful announcing
of the agent’s private information (type) and its participation in
the task
, while an autonomous agent aims at maximizing its own utility function by deciding on its participation level and announced type.} 
Our mechanism stands out from the classical contract theory-based truthful mechanisms as it empowers agents to make their own decisions about their level of involvement, making it more practical for many real-world task allocation scenarios.
It has been proven that by considering a linear form of incentive function consisting of two decision functions for the task publisher the mechanism's goals are met. The proposed truthful mechanism is initially modeled as a non-convex functional optimization with the double continuum of constraints, nevertheless, we demonstrate that {by deriving an equivalent form of the incentive constraints,} it can be reformulated as a tractable convex optimal control problem. Further, we propose a numerical algorithm to obtain the solution.

\end{abstract}

\begin{IEEEkeywords}
Task allocation, profit maximization mechanism, functional optimization, incomplete information.
\end{IEEEkeywords}

\section{Introduction}
\label{Sec:Intro}
Task allocation is an essential aspect of many modern systems, including supply chain management \cite{xiao2012modeling}, transportation \cite{ye2017fair}, and distributed computing \cite{caicedo2011distributed}. It involves the assignment of tasks to the agents based on their capabilities and availability, with the goal of maximizing efficiency and achieving the desired outcome. { However, in most of real-world task allocation applications, the task publisher faces several challenges like incomplete information about the agents' utilities, and the autonomy of the agents in deciding on their own participation levels. These challenges hinder optimal decision making in the task allocation problem for the task publisher \cite{dai2019task}.}

In this paper, we examine the task allocation problem in the presence of incomplete information. One approach that has been proposed in the literature for addressing this problem is the Bayesian Stackelberg game. In a Bayesian Stackelberg game, the task publisher (leader) is uncertain about the agent's type which is a parameter (or some parameters) in the agent's (follower) objective function \cite{paruchuri2008efficient}.
Thus, the leader maximizes its own expected payoff with respect to the distribution of the follower agent's type, subject to the best response of the follower. 
There are plenty of research works that have addressed the Bayesian Stackelberg game in task/resource allocation problems, including, e.g. power allocation problems \cite{duong2015stackelberg}, demand response \cite{yang2018demand}, and crowdsensing \cite{nie2020incentive}.


While the researchers on Bayesian Stackelberg games consider that the task publisher has incomplete information about the agents' utility function, they still have some unrealistic assumptions. Firstly, they assume that an agent voluntarily participates in
the game. This is not practical in applications like task allocation, where the cost incurred by doing the task may result in a negative payoff for the agent.
Secondly, in the Bayesian Stackelberg game, the optimal strategy of the task publisher is obtained in the average sense with respect to the distribution of the agent's type, and hence, {the agent receives an incentive function from the task publisher which is not necessarily designed in accordance with its actual type.} 
Economic theory provides an elegant tool called {``mechanism design"} to address such challenges.

Mechanism design offers a framework, especially for task allocation problems under asymmetric information in which, the task publisher is not aware of the agents’
private information \cite{blumrosen2007algorithmic}.
There are two kinds of mechanisms in the literature: direct mechanisms and indirect mechanisms. In the direct revelation mechanism, each agent is asked to announce its private information. While, in the latter, agents don’t announce their private information directly and agents' preferences can be observed only indirectly through their decisions. 

In a direct revelation mechanism for the task/resource allocation problem, the only action available to the agents is to announce their types. In this case, the {task publisher} allocates a participation level and the corresponding incentive reward to each agent as the functions of its announced type, in order to achieve three objectives, simultaneously: motivate agents to participate in the task, ensure truthful announcing of the private information of agents, and maximize the task publisher’s utility (or maximize the social welfare) \cite{montazeri2022distributed}. The direct mechanism may or may not induce a game among the agents. The latter is called also contract theory \cite{bolton2004contract}, while VCG is a well-known example of the former. Some direct mechanisms also consider further properties like (weak) budget balance \cite{kosenok2008individually} or multi-dimensional private information \cite{tavafoghi2016multidimensional}. 
{However, in many resource/task allocation applications, the agents may prefer to control the feature related to the participation level by themselves, which occurs in indirect mechanisms.} There are several pieces of research in the literature that design indirect mechanisms for resource/task allocation problems. Most of the proposed mechanisms satisfy criteria, such as Nash optimality, budget balance, or individual rationality \cite{kakhbod2011efficient,heydaribeni2019distributed} while some of them provide an algorithm for their mechanism to reach the equilibrium point \cite{eslami2022incentive,farhadi2018surrogate}. 
{ While the agents in these indirect mechanisms are not asked to announce their private information, this kind of mechanism is applicable for social welfare maximization, assuming the designer is not a profit maximizer.}
In this paper, we propose a new profit maximization mechanism for the task allocation problem to achieve the following goals (i) Maximize the task publisher's utility function (as in direct mechanisms, contract theory, and Stackelberg game and unlike indirect mechanisms) (ii) Let autonomous agents decide on their own participation level (as in Stackelberg game, and indirect mechanisms and unlike direct mechanisms) (iii) Guarantee the participation of agents and also truthful announcing of agent's type in the mechanism (as in direct mechanisms and unlike indirect mechanisms, and Stackelberg games).
{To the best of the authors’ knowledge, this paper introduces
the first truthful profit maximization mechanism that achieves these three goals, simultaneously.
}
 
 {
In the proposed mechanism, the task publisher seeks an optimal incentive reward as a function of the agent’s announced type and participation level, in order to maximize its own profit. After the reward function is imposed on the agents, each autonomous agent determines its optimal announced type and participation level. Therefore, the optimization problem of the task publisher is subject to some constraints which are: the agents' best response to the reward function, { non-negative profit making of the agents out of participation in the task (Individual Rationality)}, and making the truthful announcing of the type as the best {strategy} of the agents to the reward function (Incentive Compatibility).
We show that, by using a linear form of the incentive reward function, including two decision functions, the task publisher achieves these goals, simultaneously. 
In the designing process, we consider that the type of each agent is drawn from a specific continuous distribution that is known to the task publisher. However, by considering continuous distribution for the agents' type, we face a non-convex double continuum of incentive constraints which brings another theoretical challenge to our problem.
We show that the optimal profit maximization mechanism that satisfies all the aforementioned properties is obtained by solving a constrained nonconvex functional optimization. {
Then, a relation between the decision functions of the task publisher is derived, which is used to reformulate the functional optimization as a tractable convex optimal control problem.} Finally, we propose a numerical algorithm to obtain the solution.
}
{The main contributions of the paper are as follows:
\begin{itemize}
\item We propose a new truthful mechanism for the task allocation problem that allows autonomous agents to selfishly decide on their own participation levels while ensuring the truthful announcing
of the agent’s private information (type), its participation in
the task, and maximizing the task publisher's profit.
\item We prove that by considering a linear form of the incentive reward function,  including two decision functions for the task publisher, all the mentioned properties can be satisfied. 
\item 
{ 
We prove that by introducing a relation between the decision functions of the task publisher, the main non-convex functional optimization problem can be reformulated to an equivalent tractable convex optimal control problem.}

\end{itemize}
}


\textit{Notation:}
The symbols $\mathbb{R}$, $\mathbb{R}^{+}$, and $\mathbb{R}^n$ denote real numbers, positive real numbers, and the set of n-dimensional real column vectors, respectively; $\boldsymbol{1}$ denotes the all-ones vector. 
For a given vector or matrix $X$, $X^T$ denotes its transpose. 
Given a set $ \mathbb{U} $  and a point $y$, the projection of $y$ onto $ \mathbb{U} $, denoted by $P_U(y) \in \mathbb{U}$ satisfies $	\|y-P_U(y)\| \le \|y-v\| \;\;\; \forall v \in \mathbb{U}$.
 $\big(\frac{\partial f}{\partial x}\big)_{x^*}=\frac{\partial f(x)}{\partial x}\Big|_{x=x^*}$ denotes the partial derivative of $f(x)$ with respect to $x$  at point $x = x^*$. 

\section{profit maximization mechanism}
\label{Sec:Problem_Formulation and model}
{
We consider a task allocation problem comprising two main parties: the task publisher and the autonomous agents.
}
Inspired by \cite{xiong2017economic,xiong2020contract},
the utility function of the agent is formulated as follows
\begin{align}
\label{U_MU}
U(\theta,x,R(x,\hat{\theta})) =S(x,\theta) + R(x,\hat{\theta})
\end{align}
where {$x \in \mathbb{R}^{+}$} is the level of participation of the agent in the task,
 $\theta\in\Theta$ with $\Theta=[\underline{\theta}, \bar{\theta}]$ and {$\underline{\theta}, \bar{\theta}>0$} represents the level of the agent's willingness to participate in the task which is the private information of the agent and is treated as its type and $\hat{\theta}\in\Theta$ is the announced type which is not necessarily equal to the actual type (i.e. $\theta$) as agents may have the incentive to announce their type incorrectly if it makes more profit for them.
 {Although neither the task publisher nor other agents know the agent's type, its cumulative distribution $F(\theta)$ is common knowledge.}
 Function
$R(x,\hat{\theta}):\mathbb{R}^{+} \times \Theta \to \mathbb{R}^{+} $ is the incentive reward function that each agent receives from the task publisher. 
$ S(x,\theta)=\theta \pi(x)- px$ is the {satisfaction function} of the agent from participating in the task and it includes two parts: the first term, i.e., $ \theta \pi(x): \mathbb{R}^{+} \times \Theta \to \mathbb{R}$, represents the revenue of the agent obtained from participation in the task and the second term, i.e., $px$, represents the cost incurred by doing tasks with {$p>0$} as the linear cost coefficient or marginal cost.  


The task publisher's {utility} function is defined
as follows
\begin{align}
\label{U_CP}
V(x, R(x,\hat{\theta})) =g(x)-R(x,\hat{\theta}).
\end{align}
where $g: \mathbb{R}^{+}  \to \mathbb{R}$ is the task publisher's revenue function from agents' participation in the task. 
In summary, the task publisher designs an incentive reward as the function of the agent’s announced type and the agent’s participation level and sends it to the agents. After that,  Each agent determines its optimal announced type and participation level based on the incentive reward function and communicates it to the task publisher to finalize the contract. The information flow between the task publisher and autonomous agents is depicted in Figure \ref{Fig:system_model}.
\begin{figure}[h!]
	\centering
	\includegraphics[width=0.95\linewidth]{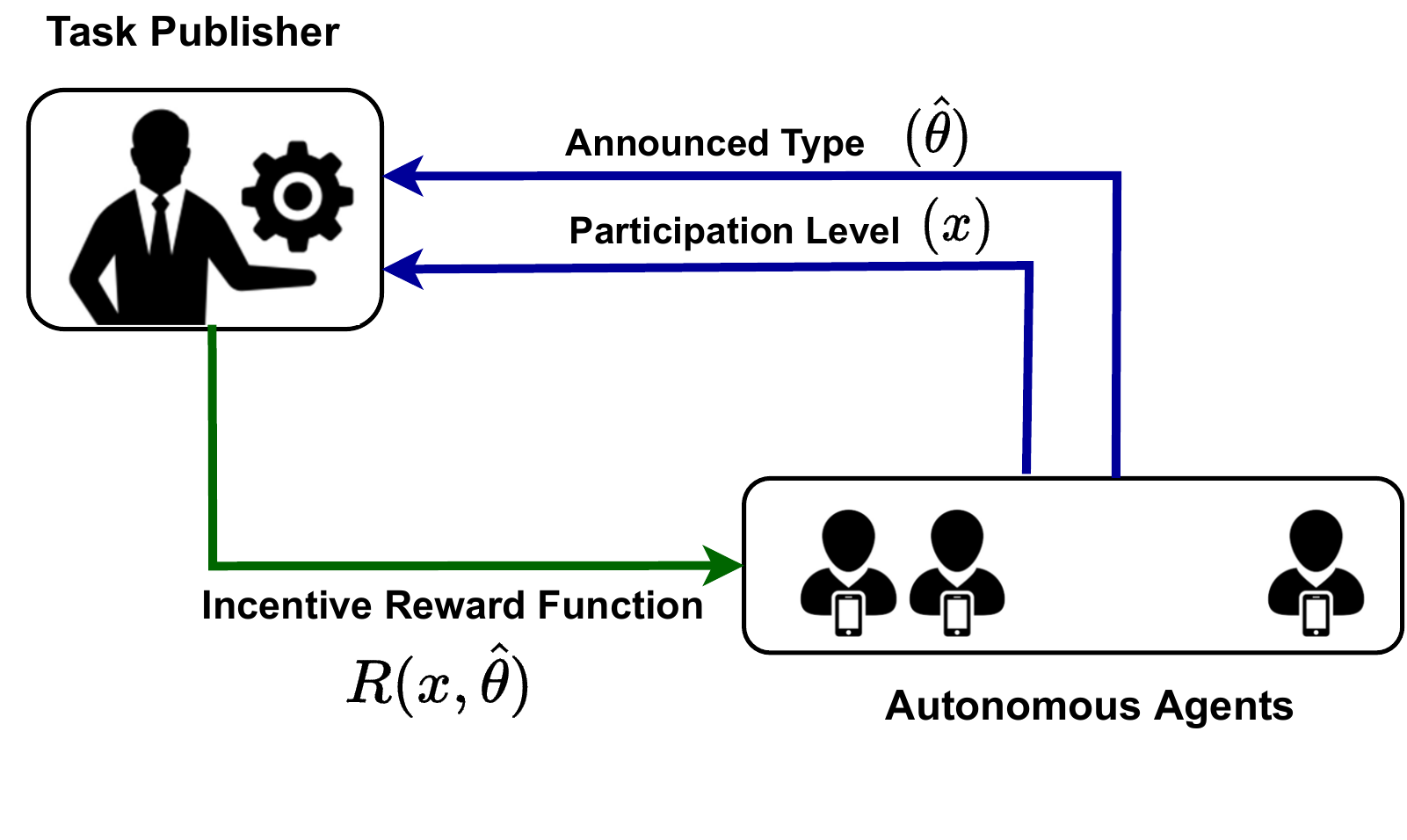}
	\caption{{The information flow between task publisher and autonomous agents.}
	}
	\label{Fig:system_model}
\end{figure}
In this study, we consider settings without externalities, where the payoffs received by each agent only depend on their level of participation and there is no game among the agents.
\begin{remark} 
The proposed model can also be extended to handle $n$ independent tasks with $\textbf{x}=(x_1,\dots,x_n )^\top \in \mathbb{R}^n$. In this case, both the agent's and task publisher's utility functions
which are $U(\boldsymbol{\theta},\textbf{x},\boldsymbol{R(x,\hat{\theta})})=\boldsymbol{\theta}^T\boldsymbol{\pi}(\textbf{x})-\textbf{p}^T\textbf{x}+\boldsymbol{1}^T\boldsymbol{R(x,\hat{\theta})}$ and $V(\textbf{x},\boldsymbol{R(x,\hat{\theta})}) =\boldsymbol{1}^T\boldsymbol{g}(\textbf{x})-\boldsymbol{1}^T\boldsymbol{R(x,\hat{\theta})}$, respectively, 
can be decomposed into $n$ independent utility functions. This is achieved by defining $\boldsymbol{\theta}=({\theta}_1,\dots,{\theta}_n )^\top \in \mathbb{R}^n$ and similarly $ \boldsymbol{\hat{\theta}},\boldsymbol{\pi}(\textbf{x}),\boldsymbol{g}(\textbf{x}),\textbf{p}$ and $\boldsymbol{R(x,\hat{\theta})}$ as vectors of $\hat \theta_i,\pi_i(x_i),g_i(x_i), p_i, R_i(x_i,\theta_i)$, respectively.

\end{remark}
As is customary in the literature \cite{xiong2020contract}, we assert the following assumption on functions $g$ and $\pi$.
\begin{assumption} 
\label{as:conc}
	 Functions $\pi(\cdot)$ and $g(\cdot)$ are non-decreasing and strictly concave. 
\end{assumption}
{In the proposed profit maximization mechanism, the task publisher adopts the following form of the reward function.
\begin{align}
\label{reward funtion}
R(x,\hat{\theta})\equiv \alpha(\hat{\theta}) x + \beta(\hat{\theta})
\end{align} 
where $\alpha(\hat{\theta})$ and $\beta(\hat{\theta})$ are both decision functions of the task publisher, which respectively represent the reward factor for the participation level paid by the task publisher and the bias reward from the task publisher to the agent. We show that this linear form of incentive reward function is rich enough to enable the task publisher to maximize its profit while {satisfying} three constraints simultaneously: (i) react optimally to the best response of the agents, (ii) motivate agents to participate in the task, 
(iii) ensure truthful announcing of the agent's type, i.e. $\hat{\theta}=\theta$. 
The formal definitions of the last two goals are presented in what follows. 
}
\begin{definition}
	\label{Def:IR}
	A mechanism is Individually Rational (IR),
	if the agent's utility is non-negative by truthful announcing of the type, i.e. $\hat{\theta}=\theta$.
 Specifically, the mechanism is IR if
   \begin{align}
U({\theta} ,x,\alpha({\theta}), \beta({\theta})) \ge 0  .
\label{Eq:IR} 
\end{align}
\end{definition}

\begin{definition}
	\label{Def:IC}
	A mechanism is Incentive Compatible (IC) 
	if the agent achieves maximum utility by truthful announcing of the type, i.e. $\hat{\theta}=\theta$.
 Specifically, the mechanism is IC if
   \begin{align}
U({\theta} ,x,\alpha({\theta}), \beta({\theta})) \ge U({\theta} ,x,\alpha(\hat{\theta}), \beta(\hat{\theta}))\;\;\; \forall  \hat{\theta} \in  [\underline{\theta}, \bar{\theta}]  .
\label{Eq:IC} 
\end{align}

\end{definition}
The task publisher's goal is to maximize its utility subject to the agent's best response, the Individual Rationality (IR), and the Incentive Compatibility (IC) constraints. 
Therefore, 
the optimal profit maximization mechanism can be obtained by solving the following
maximization problem
\begin{subequations}
	\label{opt_1_1}
\begin{align}
\label{cost}
\max_{\alpha({\hat{\theta}} ), \beta({\hat{\theta}} )}&\mathbb{E}_{\theta}[V(\chi,\alpha(\hat{\theta}), \beta(\hat{\theta}))]\\
    \label{c3}
 s.t. \; \;  & \chi(\theta, \alpha({\hat{\theta}} ),\beta({\hat{\theta}}))=\argmax_{x} U({\theta} ,x,\alpha({\hat{\theta}} ),\beta({\hat{\theta}}))
 \\ 
 \label{c1}
    &U({\theta} ,\chi,\alpha({\theta}), \beta({\theta})) \ge 0, \:\:\: \:\:\: \:\:\:\: \forall {\theta} \in  [\underline{\theta}, \bar{\theta}]
    \\
   \label{c2}
   &U({\theta} ,\chi,\alpha({\theta}), \beta({\theta})) \ge U({\theta} ,\chi,\alpha(\hat{\theta}), \beta(\hat{\theta})),
  \;\; \forall {\theta},\hat\theta \in  [\underline{\theta}, \bar{\theta}]
  .
\end{align}
\end{subequations}
{
Based on the constraints \eqref{c3} and \eqref{c2}, the task publisher takes into account the best response of the agents on the variables $x$ and $\hat \theta$, respectively and further,
according to \eqref{c2}, the reward function is designed such that the optimal announced type by the agents, is equal to their actual type $\theta$.}
\begin{remark}
Based on \eqref{opt_1_1}, the proposed mechanism maximizes the utility of the task publisher, rather than social welfare maximization in some direct mechanisms and also other indirect mechanisms \cite{eslami2022incentive,ma2020incentive}.
Furthermore, in contrast to the existing profit maximization mechanisms (contract theory) and similar to the Stackelberg game, the participation level $x$ is decided by the autonomous agent, and the task publisher takes into account the best response of the agent as expressed in constraint \eqref{c3}.  However, unlike the Stackelberg game and similar to direct mechanisms, the proposed mechanism ensures both truthful announcing of the types by the agents and their rational participation in the mechanism as expressed in constraints \eqref{c2} and \eqref{c1}, respectively.
\end{remark}

For ease of notation,  we get $\chi=\chi(\theta, \alpha({\hat{\theta}} ),\beta({\hat{\theta}}))$ when we do not want to indicate that $\chi$ is a function of $\theta$, $\alpha({\hat{\theta}} )$, and $\beta({\hat{\theta}})$.

\section{tractable reformulation of the mechanism }
\label{Sec:Contract}
 The solution of optimization \eqref{opt_1_1} gives the optimal reward function that maximizes the task publisher's utility subject to the best response of the agent,
IC and IR constraints. 
However, solving optimization \eqref{opt_1_1} is not straightforward due to the nonconvex double continuum constraint imposed by the IC constraint \cite{barucci2000incentive}. 
 To tackle this issue, in this section, we follow a multi-step approach to obtain an \textit{``equivalent tractable reformulation"} for optimization \eqref{opt_1_1}.
 At first, we reformulate the constraints \eqref{c3} and \eqref{c1}. {Next, a relation between the functions  $ \beta({\hat{\theta}} )$ and $ \alpha({\hat{\theta}} )$ is derived, which proves that is equivalent to the incentive constraints \eqref{c2}. This helps us to reformulate the initially nonconvex functional optimization into a tractable convex optimal control problem.} Finally, we propose a numerical algorithm to obtain the solution.


\begin{proposition}
\label{lambda_def}
  Constraint \eqref{c3} is equivalent to
\begin{align}
 \label{d11}
 \chi(\theta, \alpha(\hat{\theta}))=\Gamma(\frac{p-\alpha(\hat{\theta})}{\theta})
\end{align}
where function { $\Gamma(.)$} is the inverse function of $\big(\frac{\partial \pi}{\partial x}\big)_\chi$.
\end{proposition}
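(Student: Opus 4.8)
The plan is to recognize that the inner maximization defining $\chi$ in \eqref{c3} is a smooth, strictly concave program in the single scalar variable $x$, so that its solution is fully characterized by a first-order stationarity condition that can be inverted explicitly.

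First I would substitute the explicit forms of the satisfaction function $S(x,\theta)=\theta\pi(x)-px$ and the linear reward \eqref{reward funtion} into the utility \eqref{U_MU}, obtaining
\[
U(\theta,x,\alpha(\hat{\theta}),\beta(\hat{\theta}))=\theta\pi(x)-px+\alpha(\hat{\theta})x+\beta(\hat{\theta}).
\]
I would immediately observe that $\beta(\hat{\theta})$ enters additively and is independent of $x$, so it plays no role in $\argmax_x$; this explains why $\chi$ in \eqref{d11} depends only on $\theta$ and $\alpha(\hat{\theta})$. Next, under Assumption \ref{as:conc} the function $\pi$ is strictly concave, and since $\theta>0$ the term $\theta\pi(x)$ is strictly concave; adding the affine terms $-px+\alpha(\hat{\theta})x$ preserves strict concavity. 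Hence $U$ is strictly concave in $x$, so any stationary point is the unique global maximizer, and the $\argmax$ is well-defined and single-valued.

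The core computation is then to set the derivative in $x$ to zero, which gives $\big(\frac{\partial\pi}{\partial x}\big)_\chi=\frac{p-\alpha(\hat{\theta})}{\theta}$. Because $\pi$ is strictly concave, $\frac{\partial\pi}{\partial x}$ is strictly decreasing and therefore injective, so its inverse $\Gamma$ exists on the relevant range; applying $\Gamma$ to both sides yields exactly \eqref{d11}. The converse is immediate: any $\chi$ satisfying \eqref{d11} satisfies the stationarity condition, and by strict concavity is the maximizer, so \eqref{c3} and \eqref{d11} are equivalent.

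The one subtle point — the main thing to watch rather than a genuine difficulty — is the domain restriction $x\in\mathbb{R}^{+}$: I would need the stationary point to be interior so that the unconstrained first-order condition applies rather than a boundary (KKT) condition, which amounts to checking that $\Gamma\big(\frac{p-\alpha(\hat{\theta})}{\theta}\big)\ge 0$ lies in the admissible range of $\frac{\partial\pi}{\partial x}$. I would handle this by using the non-decreasing and strictly concave structure of $\pi$ together with the sign of $p-\alpha(\hat{\theta})$ to guarantee that $\Gamma$ is well-defined and returns a nonnegative value; beyond this domain check the argument is a routine first-order-condition inversion.
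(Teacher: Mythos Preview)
Your proposal is correct and follows essentially the same approach as the paper: establish strict concavity of $U$ in $x$ via Assumption~\ref{as:conc}, apply the first-order condition to obtain $\big(\tfrac{\partial\pi}{\partial x}\big)_\chi=\tfrac{p-\alpha(\hat{\theta})}{\theta}$, and invert the strictly monotone derivative. Your treatment is in fact slightly more thorough, as you explicitly note that $\beta(\hat\theta)$ drops out, address the converse direction, and flag the interior-point issue on $\mathbb{R}^{+}$---a point the paper's proof leaves implicit.
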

\begin{proof}
{In \eqref{c3}, $\chi$ is a critical point for $U({\theta} ,x,\alpha({\hat\theta}), \beta({\hat\theta}))$ over $x$ {and from Assumption \ref{as:conc}, $U$ is a strictly concave function in $x$}. Thus,
 using the first-order optimality condition we have}
\begin{align}
\label{Eq:det_x_4}
\big(\frac{\partial \pi}{\partial x}\big)_\chi= \frac{p-\alpha(\hat{\theta})}{\theta}.
\end{align}
 From Assumption \ref{as:conc} $\frac{\partial \pi(x)}{\partial x}$ is a strictly monotone function. Thus, we can define the function { $\Gamma(.)$} as the inverse function of $\big(\frac{\partial \pi}{\partial x}\big)_\chi$, which gives $\chi(\theta,\alpha(\hat{\theta}))=\Gamma(\frac{p-\alpha(\hat{\theta})}{\theta})$.
\end{proof}
\begin{remark}
\label{rem1}
	According to Assumption \ref{as:conc}, $\frac{\partial \pi(x)}{\partial x}$  is a decreasing function. Thus, function $\Gamma$ as an inverse function of $\big(\frac{\partial \pi}{\partial x}\big)_\chi$, is decreasing with respect to $\frac{p-\alpha(\hat{\theta})}{\theta}$. Hence 1) $\frac{\partial \chi}{\partial\theta}>0$, and 2) $\frac{\partial \chi}{\partial\alpha}>0$.
\end{remark}
 Proposition \ref{Lem1} shows that the IR constraint \eqref{c1} can be reduced to an equality constraint for  $\theta=\underline \theta$.  
\begin{proposition}
	\label{Lem1} 
	If the following conditions hold:
 \begin{enumerate}
     \item Constraint \eqref{c1} is satisfied for $\theta=\underline \theta$, 
     \item Constraint \eqref{c2} is satisfied for all ${\theta}\in [\underline{\theta}, \bar{\theta}]$,
 \end{enumerate}
then \eqref{c1} is also satisfied for every $\theta> \underline \theta$.
Furthermore, in any optimal solution of optimization \eqref{opt_1_1}, we have { $U(\underline{\theta},\chi,\alpha(\underline{\theta}),\beta(\underline{\theta})) = 0$}, meaning that the IR constraint is active for $\underline{\theta}$.
\end{proposition}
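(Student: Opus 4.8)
The plan is to reduce both assertions to a single structural fact: the agent's equilibrium utility under truth-telling is non-decreasing in its type. Accordingly, I would first introduce the truthful-utility profile
\[
W(\theta) := U(\theta,\chi(\theta,\alpha(\theta)),\alpha(\theta),\beta(\theta)),
\]
so that the IR constraint \eqref{c1} reads $W(\theta)\ge 0$ for all $\theta$, while the IC constraint \eqref{c2} says that $\hat\theta=\theta$ maximizes $u(\theta,\hat\theta):=U(\theta,\chi(\theta,\alpha(\hat\theta)),\alpha(\hat\theta),\beta(\hat\theta))$ over $\hat\theta$. With this notation the whole proposition becomes a statement about $W$, and I can leverage Proposition \ref{lambda_def} and the first-order condition \eqref{Eq:det_x_4} throughout.

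The first key step is to show that $W$ is non-decreasing on $[\underline\theta,\bar\theta]$. For $\theta_1<\theta_2$, applying IC to type $\theta_2$ reporting $\theta_1$ gives $W(\theta_2)=u(\theta_2,\theta_2)\ge u(\theta_2,\theta_1)$, whereas $W(\theta_1)=u(\theta_1,\theta_1)$, so $W(\theta_2)-W(\theta_1)\ge u(\theta_2,\theta_1)-u(\theta_1,\theta_1)$. Since both right-hand terms carry the same report $\theta_1$ (hence the same $\alpha(\theta_1),\beta(\theta_1)$), I would differentiate $s\mapsto u(s,\theta_1)$ in the true type; the induced $\partial\chi/\partial s$ terms are multiplied by $s\,\pi'(\chi)-p+\alpha(\theta_1)$, which vanishes by the first-order condition \eqref{Eq:det_x_4} (the usual envelope cancellation), leaving $\frac{d}{ds}u(s,\theta_1)=\pi(\chi(s,\alpha(\theta_1)))\ge 0$. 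Integrating over $[\theta_1,\theta_2]$ shows the right-hand side is non-negative, so $W(\theta_2)\ge W(\theta_1)$. Combined with condition~1, namely $W(\underline\theta)\ge 0$, monotonicity yields $W(\theta)\ge W(\underline\theta)\ge 0$ for every $\theta>\underline\theta$, which is the first assertion.

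For the second assertion I would argue by contradiction with a uniform downward shift of the bias. Suppose an optimal solution satisfies $W(\underline\theta)=c>0$ and replace $\beta(\cdot)$ by $\beta(\cdot)-c$. Because $\beta$ enters $U$ additively and independently of $x$, the best-response map $\chi$ in \eqref{c3} is unchanged, and $u(\theta,\hat\theta)$ (hence $W$) shifts down by the constant $c$ uniformly in $(\theta,\hat\theta)$; this preserves the ranking in $\hat\theta$, so \eqref{c2} still holds. By the monotonicity just established, $\min_\theta W(\theta)=W(\underline\theta)=c$, so the shifted profile $W(\theta)-c$ stays non-negative and \eqref{c1} remains satisfied. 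Finally the publisher's integrand $V=g(\chi)-\alpha(\hat\theta)\chi-\beta(\hat\theta)$ rises by exactly $c$ pointwise, so the expected objective strictly increases, contradicting optimality; hence $W(\underline\theta)=0$.

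The main obstacle I anticipate is making the monotonicity step fully rigorous: one must justify differentiability of $\chi(s,\alpha(\theta_1))$ in $s$ and verify that the first-order condition exactly annihilates the $\partial\chi/\partial s$ contributions, so that only the sign of $\pi(\chi)$ survives. That sign — and hence $W'\ge 0$ — rests on $\pi$ being non-negative at the equilibrium participation, which I would justify from $\pi$ being a non-decreasing revenue with $\pi(0)\ge 0$. A secondary point requiring care is the perturbation argument: I must confirm that the constant shift of $\beta$ keeps \emph{every} constraint of \eqref{opt_1_1} feasible — which it does, since $\chi$, the IC ranking, and (by monotonicity) IR at the worst-off type $\underline\theta$ are all preserved — so that the strict improvement genuinely contradicts optimality.
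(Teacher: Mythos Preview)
Your proposal is correct and follows essentially the same approach as the paper: both arguments establish that the agent's utility (with a fixed report) is non-decreasing in the true type via the envelope cancellation $\big(\tfrac{\partial U}{\partial x}\big)_\chi=0$ leaving $\pi(\chi)\ge 0$, then chain this with the IC inequality to propagate IR from $\underline\theta$ upward, and both use the uniform downward shift of $\beta$ to show the IR constraint binds at $\underline\theta$. Your write-up is slightly more explicit in introducing the truthful-utility profile $W$ and in flagging that $\pi(\chi)\ge 0$ requires justification, but the logical skeleton is identical to the paper's.
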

\begin{proof}
	By substituting $\chi$ from \eqref{d11} in $U$, and differentiating $U$ with respect to agent's actual type, i.e. $\theta$, we have
 \begin{align}
		\label{Eq:IR_pr1}
	\frac{d U(\theta,\chi,\alpha(\hat{\theta}), \beta(\hat{\theta}))}{d \theta}=  \big(\frac{\partial U}{\partial x}\big)_\chi\frac{\partial  \chi(\theta,\alpha(\hat{\theta})) }{\partial \theta}+ \pi(\chi).
		\end{align}
The first term of \eqref{Eq:IR_pr1} equals zero due to the first-order condition, i.e. $\big(\frac{\partial U}{\partial x}\big)_\chi=0$, and hence we have 
   \begin{align}
\label{Eq:IR_pr2}
\frac{d U(\theta,\chi,\alpha(\hat{\theta}), \beta(\hat{\theta}))}{d \theta}=\pi(\chi) \ge 0.
\end{align}  
Let us consider the agent with type $ \tilde{\theta} \in  [\underline{\theta}, \bar{\theta}]$ and use the following inequality
 \begin{align}
\nonumber
 &U(\tilde{\theta},\chi(\tilde{\theta}, \alpha(\tilde{\theta})),\alpha(\tilde{\theta}), \beta(\tilde{\theta})) \ge 
 U(\tilde{\theta},\chi(\tilde{\theta}, \alpha(\underline{\theta})),\alpha(\underline{\theta}), \beta(\underline{\theta}))\\
 \label{Eq:IR_pr3}
 &\ge
  U(\underline{\theta},\chi(\underline{\theta}, \alpha(\underline{\theta})),\alpha(\underline{\theta}), \beta(\underline{\theta}))
\end{align}
where the first inequality holds from IC constraint \eqref{c2} for 
$\hat \theta=\underline{\theta}$, and the second inequality follows from \eqref{Eq:IR_pr2}. 
Thus, it follows $ U(\tilde{\theta},\chi(\tilde{\theta}, \alpha(\tilde{\theta})),\alpha(\tilde{\theta}), \beta(\tilde{\theta})) \ge  U(\underline{\theta},\chi(\underline{\theta}, \alpha(\underline{\theta})),\alpha(\underline{\theta}), \beta(\underline{\theta}))$. 
Hence, if the constraint \eqref{c1} is satisfied for $\theta=\underline \theta$ and constraint \eqref{c2} is satisfied for all ${\theta}\in  [\underline{\theta}, \bar{\theta}]$, then IR constraint \eqref{c1} is satisfied for every $\theta> \underline \theta$.
{ To complete the proof, it must be shown that $IR_{\underline{\theta}}$ is binding. If $IR_{\underline{\theta}}$ is not bind, the value of $\beta(\tilde{\theta})$ can be decreased by a sufficiently small $\epsilon>0$ for all $\tilde{\theta} \in  [\underline{\theta}, \bar{\theta}]$ such that, the task publisher’s utility increases while both $IR_{\underline{\theta}}$ constraint and the IC constraints are still satisfied. The former is true due to the strict $IR_{\underline{\theta}}$  inequality \eqref{c1}, and the latter holds since subtracting $\epsilon$ from both sides \eqref{c2} does not change the IC inequality. This contradicts with optimally of solution. }
\end{proof}

Next, the optimization \eqref{opt_1_1} is reformulated by introducing a relation between decision functions $\alpha(\hat \theta)$ and $\beta(\hat \theta)$. This results in removing the non-convex IC constraints.

\begin{theorem}
	\label{Lem2}
 Optimization \eqref{opt_1_1} is equivalent to the following optimization problem.
 \begin{subequations}
	\label{opt_2_2}
\begin{align}
\label{cost1}
&\max_{\alpha(\hat{\theta}), \beta(\hat{\theta})}\mathbb{E}_{\theta}[V(\chi, \alpha(\hat{\theta}), \beta(\hat{\theta}))]\\
 \label{c1_2}
  & s.t. \,\, \,\dot{\alpha} ({\hat{\theta}}) \ge 0 
  \\
   \label{c2_2}
   &\beta({\hat{\theta}})=   \int_{\underline{\theta}}^{\hat{\theta}}[K_{\theta}(\chi(\theta,\hat{\theta}),\theta,\hat{\theta})]\Big|_{\theta=\hat{\theta}}d\hat{\theta}-K(\chi(\theta,\alpha(\hat{\theta})),\theta,\hat{\theta})\Big|_{\substack{
		{\theta={\hat{\theta}}}
}}\\
    \label{c3_2}
& \chi(\theta,\alpha(\hat{\theta}))=\Gamma(\frac{p-\alpha(\hat{\theta})}{\theta})
\end{align}
\end{subequations}
where $K(\chi(\theta,\alpha(\hat{\theta})),\theta,\alpha(\hat{\theta}))\equiv\alpha(\hat{\theta})\chi(\theta,\alpha(\hat{\theta}))+\theta\pi(\chi(\theta,\alpha(\hat{\theta})))-p\chi(\theta,\alpha(\hat{\theta}))$ and $K_{\theta}$ defines the derivation of $K$ with respect to the actual type of agent, i.e. $\theta$.
\end{theorem}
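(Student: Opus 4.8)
The plan is to keep the objective fixed and show that the constraint sets of \eqref{opt_1_1} and \eqref{opt_2_2} carve out the same optimizers. Constraint \eqref{c3_2} is nothing but \eqref{c3} rewritten through Proposition \ref{lambda_def}, so the whole content is to prove that the individual-rationality constraint \eqref{c1} together with the double-continuum incentive constraint \eqref{c2} is equivalent, at any optimum, to the monotonicity condition \eqref{c1_2} and the envelope identity \eqref{c2_2}. To this end I would introduce the indirect-utility map
\begin{equation*}
\Phi(\theta,\hat\theta)\equiv U\bigl(\theta,\chi(\theta,\alpha(\hat\theta)),\alpha(\hat\theta),\beta(\hat\theta)\bigr)=K\bigl(\chi(\theta,\alpha(\hat\theta)),\theta,\hat\theta\bigr)+\beta(\hat\theta),
\end{equation*}
so that \eqref{c2} becomes $\theta\in\argmax_{\hat\theta}\Phi(\theta,\hat\theta)$ for every $\theta$, and define the truthful payoff $W(\theta)\equiv\Phi(\theta,\theta)$. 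Since $\chi(\theta,\alpha(\hat\theta))$ is the agent's best response, $\partial K/\partial\chi=0$ there, so $\partial\Phi/\partial\theta=K_\theta=\pi(\chi)$, reproducing \eqref{Eq:IR_pr2}.

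For the necessity direction I would apply the envelope theorem at the interior maximizer $\hat\theta=\theta$ to obtain $W'(\theta)=[\partial\Phi/\partial\theta]_{\hat\theta=\theta}=[K_\theta]_{\theta=\hat\theta}$, then integrate from $\underline\theta$ and substitute $W(\underline\theta)=0$, which is the binding IR established in Proposition \ref{Lem1}. Writing $W(\hat\theta)=K(\chi,\hat\theta,\hat\theta)+\beta(\hat\theta)$ and solving for $\beta$ then yields exactly \eqref{c2_2}. For \eqref{c1_2}, I would add the two IC inequalities for a pair $\theta_1<\theta_2$ and use $\partial\Phi/\partial\theta=\pi(\chi)$ to obtain $\int_{\theta_1}^{\theta_2}\bigl[\pi(\chi(u,\alpha(\theta_2)))-\pi(\chi(u,\alpha(\theta_1)))\bigr]\,du\ge0$; since $\pi$ is increasing and $\partial\chi/\partial\alpha>0$ by Remark \ref{rem1}, this forces $\alpha(\theta_2)\ge\alpha(\theta_1)$, i.e. $\dot\alpha\ge0$.

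For the sufficiency direction I would verify that \eqref{c1_2} and \eqref{c2_2} return global incentive compatibility. Differentiating \eqref{c2_2} gives $\dot\beta(\hat\theta)=-\dot\alpha(\hat\theta)\,\chi(\hat\theta,\alpha(\hat\theta))$, and a short computation then factorizes the announcement gradient as
\begin{equation*}
\frac{\partial\Phi}{\partial\hat\theta}(\theta,\hat\theta)=\dot\alpha(\hat\theta)\bigl[\chi(\theta,\alpha(\hat\theta))-\chi(\hat\theta,\alpha(\hat\theta))\bigr],
\end{equation*}
which vanishes on the diagonal, confirming stationarity. Because $\dot\alpha\ge0$ and $\chi$ is increasing in its first argument ($\partial\chi/\partial\theta>0$, Remark \ref{rem1}), this gradient is nonnegative for $\hat\theta<\theta$ and nonpositive for $\hat\theta>\theta$; integrating $\partial_s\Phi(\theta,s)$ from $\hat\theta$ to $\theta$ then gives $\Phi(\theta,\theta)-\Phi(\theta,\hat\theta)\ge0$ for every $\hat\theta$, i.e. \eqref{c2}. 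Finally, \eqref{c2_2} forces $W(\underline\theta)=0$, so Proposition \ref{Lem1} restores \eqref{c1}. Since the objective is untouched and, again by Proposition \ref{Lem1}, IR binds at every optimum of \eqref{opt_1_1}, the two programs share the same optimizers and optimal value.

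The main obstacle is the sufficiency step: the envelope identity \eqref{c2_2} only enforces a local, first-order (diagonal) stationarity, and upgrading this to the global double-continuum inequality \eqref{c2} requires the single-crossing structure made explicit in the factorization of $\partial\Phi/\partial\hat\theta$ and in the sign of $\partial\chi/\partial\theta$. Care is also needed with the regularity assumptions --- absolute continuity of $W$ and differentiability of $\alpha$ --- under which the envelope representation and the sign-chasing integration on both sides of the diagonal are justified.
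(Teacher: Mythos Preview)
Your proposal is correct and follows the same two-direction strategy as the paper: in both proofs the envelope identity $\partial\Phi/\partial\theta=\pi(\chi)$, the binding IR at $\underline\theta$ from Proposition~\ref{Lem1}, and a single-crossing argument do all the work. The differences are organizational rather than conceptual. For necessity of \eqref{c1_2} the paper takes $\check\theta=\tilde\theta-\epsilon$, sums the two IC inequalities, and divides by $\epsilon$ twice to reach a differential condition, whereas you keep a finite pair $\theta_1<\theta_2$ and read off monotonicity from the sign of the integrand $\pi(\chi(u,\alpha(\theta_2)))-\pi(\chi(u,\alpha(\theta_1)))$; your version avoids the limit bookkeeping. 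For sufficiency the paper checks the mixed partial $\partial^2 U/\partial\hat\theta\,\partial\theta\ge 0$ and integrates a difference of envelope derivatives, while you first compute $\dot\beta(\hat\theta)=-\dot\alpha(\hat\theta)\,\chi(\hat\theta,\alpha(\hat\theta))$ and then obtain the clean factorization $\partial\Phi/\partial\hat\theta=\dot\alpha(\hat\theta)\bigl[\chi(\theta,\alpha(\hat\theta))-\chi(\hat\theta,\alpha(\hat\theta))\bigr]$, which makes the sign analysis on either side of the diagonal immediate. Both routes rely on the same structural facts (Assumption~\ref{as:conc} and Remark~\ref{rem1}); yours is a bit more transparent, the paper's a bit more explicit about the intermediate algebra. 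Your closing caveat about regularity (absolute continuity of $W$, differentiability of $\alpha$) is apt and is handled only implicitly in the paper.
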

\begin{proof}
To show the equivalency of optimization problems \eqref{opt_1_1} and \eqref{opt_2_2}, it suffices to prove that for any optimal solution to problem \eqref{opt_1_1}, there exists a solution to problem \eqref{opt_2_2} with the same objective value and visa versa. We prove this theorem in two steps. First, we show that given a solution to \eqref{opt_2_2}, we can find a corresponding solution to \eqref{opt_1_1} with the same objective value.
By considering the definition of agent's utility in \eqref{U_MU}, $K$ can be rewritten as
		\begin{align}
	\label{U_K}
	K(\chi(\theta,\alpha(\hat{\theta})),\theta,\alpha(\hat{\theta}))=U(\theta,\chi(\theta,\alpha(\hat{\theta})),\alpha(\hat{\theta}), \beta(\hat{\theta}))-\beta(\hat{\theta}).
	\end{align}
	By derivation from \eqref{U_K} with respect to  $\theta$, we have
			\begin{align}
	\label{K_derivation}
	\frac{dK}{d\theta}=\big(\frac{\partial U}{\partial x}\big)_\chi\frac{\partial \chi}{\partial \theta}+\frac{\partial U}{\partial \theta}-\frac{\partial \beta(\hat{\theta}) }{\partial \theta}.
	\end{align}
	Since $\beta$ is not a function of $\theta$ and $\big(\frac{\partial U}{\partial x}\big)_\chi=0$,  
	we obtain
		\begin{align}
	\label{K_derivation2}
	\frac{dK}{d\theta}=\pi(\chi).
	\end{align}
Now considering an agent with type $\tilde{\theta}$, 
	  by replacing $\beta$ from constraint \eqref{c2_2} and $\frac{dK}{d\theta}$ from \eqref{K_derivation2}, IC constraints \eqref{c2} can be rewritten as
	\begin{align}
	\nonumber
	&\alpha(\tilde{\theta})\chi(\tilde{\theta},\alpha(\tilde{\theta}))+\int_{\underline{\theta}}^{\tilde{\theta}}\pi(\chi(y,\alpha(y))) dy-\alpha(\tilde{\theta})\chi(\tilde{\theta},\alpha(\tilde{\theta}))- \\
	\nonumber
	&\tilde{\theta}\pi(\chi(\tilde{\theta},\alpha(\tilde{\theta})))  +p\chi(\tilde{\theta},\alpha(\tilde{\theta}))+ \tilde{\theta}\pi(\chi(\tilde{\theta},\alpha(\tilde{\theta}))) -p\chi(\tilde{\theta},\alpha(\tilde{\theta}))	 \\
	\nonumber
	&\geq \alpha({\check{\theta}})\chi(\tilde{\theta},\alpha({\check{\theta}}))+\int_{\underline{\theta}}^{{\check{\theta}}}\pi(\chi(y,\alpha(y))) dy-\alpha({\check{\theta}})\chi({\check{\theta}},\alpha({\check{\theta}}))- \\
	\label{IC_3}
	& {\check{\theta}}\pi(\chi({\check{\theta}},\alpha({\check{\theta}}))) +p\chi({\check{\theta}},\alpha({\check{\theta}}))+ {\tilde{\theta}}\pi(\chi(\tilde{\theta},\alpha({\check{\theta}}))) -p\chi(\tilde{\theta},\alpha({\check{\theta}}))
	\end{align}
	where ${\check{\theta}}\in  [\underline{\theta}, \bar{\theta}]$ is the arbitrary announced type by the agent.
	By adding and subtracting $\beta(\check{\theta})$ to the left side of  \eqref{IC_3}  and simplifying it, we have
			\begin{align}
	\label{IC_if1}
	&\int_{{\check{\theta}}}^{\tilde{\theta}} \pi(\chi(y,\alpha(y))) dy	\geq \\
 \nonumber
&U(\tilde{\theta},\chi(\tilde{\theta},\alpha({\check{\theta}})),\alpha({\check{\theta}}),\beta({\check{\theta}}))- U(\check{\theta},\chi(\check{\theta},\alpha({\check{\theta}})),\alpha({\check{\theta}}),\beta({\check{\theta}})).
	\end{align}
	Thus, by considering \eqref{Eq:IR_pr2} we have
		\begin{align}
		\label{IC_if2}
&\int_{{\check{\theta}}}^{\tilde{\theta}}[\frac{d U(\theta,\chi(\theta,\alpha(y)),\alpha(y),\beta(y))}{d \theta} \Big|_{\theta=y}]dy \geq	\\
 \nonumber
&U(\tilde{\theta},\chi(\tilde{\theta},\alpha({\check{\theta}})),\alpha({\check{\theta}}),\beta({\check{\theta}}))- U(\check{\theta},\chi(\check{\theta},\alpha({\check{\theta}})),\alpha({\check{\theta}}),\beta({\check{\theta}})).
	\end{align}
	Next, we show that \eqref{IC_if2} which is equivalent to IC constraint holds true. By derivation from \eqref{Eq:IR_pr2} with respect to $\hat{\theta}$, we have
			\begin{align}
	\label{IC_if3}
\frac{d^2 U(\theta,\chi(\theta,\alpha(\hat{\theta})),\alpha(\hat{\theta}),\beta(\hat{\theta}))}{d \hat{\theta}d \theta}=\frac{d \pi}{d \hat{\theta}}=\big(\frac{\partial \pi}{\partial x}\big)_\chi \times\frac{\partial \chi }{\partial \alpha} \times \frac{\partial \alpha}{\partial \hat{\theta}}\ge 0.
	\end{align}
Since $\frac{\partial \pi}{\partial x}$,$\frac{\partial \chi }{\partial \alpha}$ and $\frac{\partial \alpha}{\partial \theta} $ are positive as the results of {Assumption \ref{as:conc}}, Remark \ref{rem1}, and constraint \eqref{c1_2}, respectively, \eqref{IC_3} holds true. 
Hence, 	if $\tilde{\theta} > {\check{\theta}}$, we have 
	\begin{align}
\label{IC_if4}
&\int_{{\check{\theta}}}^{\tilde{\theta}}[\frac{d U(\theta,\chi(\theta,\alpha(y)),\alpha(y),\beta(y))}{d \theta} \Big|_{\theta=y}]dy	\geq \\
\nonumber
&\int_{{\check{\theta}}}^{\tilde{\theta}}\frac{d U(\theta,\chi(\theta,\alpha({\check{\theta}})),\alpha({\check{\theta}}),\beta({\check{\theta}}))}{d \theta} d\theta=
\\
\nonumber &U(\tilde{\theta},\chi(\tilde{\theta},\alpha({\check{\theta}})),\alpha({\check{\theta}}),\beta({\check{\theta}}))- U(\check{\theta},\chi(\check{\theta},\alpha({\check{\theta}})),\alpha({\check{\theta}}),\beta({\check{\theta}}))
\end{align}
and thus \eqref{IC_if2} is verified and hence, the IC constraint holds true.
In a similar way, we can show that IC constraint holds true for  $\tilde{\theta}< {\check{\theta}}$.

In the second part of the proof, we show that given an optimal solution to optimization \eqref{opt_1_1}, we can find a solution to optimization \eqref{opt_2_2}
with the same value of the objective. 
As the first step, we prove that the IC constraint in 
optimization \eqref{opt_1_1} implies the monotonicity of $\alpha(\theta)$. 
Let's consider an agent of type $\tilde{\theta}$ with an announced type $\check{\theta}=\Tilde{\theta}-\epsilon$, where $\epsilon> 0$ and then $\epsilon\to 0$, hence
the IC constraint gives
\begin{align}
\label{IC_5}
&(\alpha({\tilde{\theta}})-p) \chi(\tilde{\theta},\alpha(\tilde{\theta}))+\beta({\tilde{\theta}})+ \tilde{\theta} \pi(\chi(\tilde{\theta},\alpha(\tilde{\theta})))\ge \\
\nonumber
&(\alpha({\check{\theta}})-p) \chi(\tilde{\theta},\alpha(\check{\theta}))+\beta({\check{\theta}})+ \tilde{\theta} \pi(\chi(\tilde{\theta},\alpha(\check{\theta}))
\end{align}
and if we consider an agent of type $\check{\theta}$ with the announced type equal to $\tilde \theta$, then 
the IC constraint reads as
\begin{align}
\label{IC_6}
& (\alpha({\check{\theta}})-p) \chi(\check{\theta},\alpha(\check{\theta}))+\beta({\check{\theta}})+ \check{\theta} \pi(\chi(\check{\theta},\alpha(\check{\theta})))\ge \\
\nonumber
&(\alpha({\tilde{\theta}})-p) \chi(\check{\theta},\alpha(\tilde{\theta}))+\beta({\tilde{\theta}})+ \check{\theta} \pi(\chi(\check{\theta},\alpha(\tilde{\theta}))).
\end{align}	
{By summation of \eqref{IC_5} and \eqref{IC_6} and rearranging the terms, we get
\begin{align}
\nonumber
&(\alpha({\tilde{\theta}})-p) \Big(\chi(\tilde{\theta},\alpha(\tilde{\theta}))-\chi(\check{\theta},\alpha(\tilde{\theta}))\Big)+ \tilde{\theta} \Big(\pi(\chi(\tilde{\theta},\alpha(\tilde{\theta})))-\\
\nonumber
&\pi(\chi(\tilde{\theta},\alpha(\check{\theta})))\Big)\ge \\
\nonumber
&(\alpha({\check{\theta}})-p) \Big(\chi(\tilde{\theta},\alpha(\check{\theta}))-\chi(\check{\theta},\alpha(\check{\theta}))\Big)+ \check{\theta} \Big(\pi(\chi(\check{\theta},\alpha(\tilde{\theta})))-\\
\label{IC_add}
&\pi(\chi(\check{\theta},\alpha(\check{\theta})))\Big).
\end{align}	
}
{By dividing \eqref{IC_add} by $\epsilon$, we have
\begin{align}
\label{IC_add2}
&(\alpha({\tilde{\theta}})-p) \big(\frac{\partial \chi}{\partial {\theta}}\big)_{\tilde{\theta}}+\tilde{\theta} \big(\frac{\partial \pi}{\partial {\hat{\theta}}}\big)_{\tilde{\theta}}\ge 
(\alpha({\check{\theta}})-p) \big(\frac{\partial \chi}{\partial {\theta}}\big)_{\tilde{\theta}}+\check{\theta} \big(\frac{\partial \pi}{\partial {\hat{\theta}}}\big)_{\tilde{\theta}}.
\end{align}	
Equation \eqref{IC_add2} can be rewritten as
\begin{align}
\label{IC_add3}
&(\alpha({\tilde{\theta}})-\alpha({\check{\theta}})) \big(\frac{\partial \chi}{\partial {\theta}}\big)_{\tilde{\theta}}+(\tilde{\theta}-\check{\theta})  \big(\frac{\partial \pi}{\partial x}\big)_\chi \frac{\partial \chi}{\partial {\alpha}}\big(\frac{\partial \alpha}{\partial {\hat{\theta}}}\big)_{\tilde{\theta}}\ge 0.
\end{align}	
Dividing \eqref{IC_add3} again by $\epsilon$, gives
\begin{align}
\label{IC_add4}
&\big(\frac{\partial \alpha}{\partial {\hat{\theta}}}\big)_{\tilde{\theta}}\Bigg(
\big(\frac{\partial \chi}{\partial {\theta}}\big)_{\tilde{\theta}}+ \big(\frac{\partial \pi}{\partial x}\big)_\chi \frac{\partial \chi}{\partial {\alpha}}\Bigg)\ge 0.
\end{align}	
Since $\frac{\partial \pi}{\partial x},\frac{\partial \chi}{\partial \alpha}$ and $\frac{\partial \chi}{\partial \theta}$ are positive as the results of {Assumption \ref{as:conc}}, Remark \ref{rem1}, and the fact that ${\tilde{ \theta}}$ is any arbitrary point in $[\underline{\theta}, \bar{\theta}]$, we can conclude that $\dot{\alpha} ({\hat{\theta}})=\frac{\partial \alpha}{\partial {\hat{\theta}}} \ge 0$.}

To derive constraint \eqref{c2_2}, we rearrange \eqref{IC_5} and \eqref{IC_6} as follows
\begin{align}
\nonumber
&(\alpha(\check{\theta})-p)\chi(\tilde{\theta},\alpha(\check{\theta}))+\tilde{\theta}\pi(\chi(\tilde{\theta},\alpha(\check{\theta})))-(\alpha(\tilde{\theta})-p)\chi(\tilde{\theta},\alpha(\tilde{\theta}))\\
\nonumber
&-\tilde{\theta}\pi(\chi(\tilde{\theta},\alpha(\tilde{\theta}))) \leq \beta(\tilde{\theta})-\beta(\check{\theta}) \leq (\alpha(\check{\theta})-p)\chi(\check{\theta},\alpha(\check{\theta}))\\
\label{onlyif}
&+\check{\theta}\pi(\chi(\check{\theta},\alpha(\check{\theta})))-(\alpha(\tilde{\theta})-p)\chi(\check{\theta},\alpha(\tilde{\theta}))-\check{\theta}\pi(\chi(\check{\theta},\alpha(\tilde{\theta}))).
\end{align}	
Dividing \eqref{onlyif} by $\epsilon$, 
we have
\begin{align}
\nonumber 
&[\frac{d}{d \hat{\theta}}[(\alpha(\hat{\theta})-p)\chi(\theta,\alpha(\hat{\theta}))+\theta\pi(\chi(\theta,\alpha(\hat{\theta})))]]\Big|_{\substack{
		{\hat{\theta}=\tilde{\theta}}\\
		{{\theta}=\tilde{\theta}}
}} \le \frac{d}{d\tilde{\theta}} \beta(\tilde{\theta})  \\
\label{IC_form_d}
&\le[\frac{d}{d \hat{\theta}}[(\alpha(\hat{\theta})-p)\chi(\theta,\alpha(\hat{\theta}))+\theta\pi(\chi(\theta,\alpha(\hat{\theta})))]]\Big|_{\substack{
		{\hat{\theta}=\tilde{\theta}}\\
		{{\theta}=\tilde{\theta}}
}}
\end{align}	
which implies that
\begin{align}
\label{IC_form_d2}
\frac{d}{d\tilde{\theta}} \beta(\tilde{\theta}) =
[\frac{d}{d \hat{\theta}}[(\alpha(\hat{\theta})-p)\chi(\theta,\alpha(\hat{\theta}))+\theta\pi(\chi(\theta,\alpha(\hat{\theta})))]]\Big|_{\substack{
		{\hat{\theta}=\tilde{\theta}}\\
		{{\theta}=\tilde{\theta}}}}.
\end{align}	 
Integrating \eqref{IC_form_d2} with respect to $\tilde{\theta}$ from $\underline{\theta}$ to $\tilde{\theta}$, we have
\begin{align}
\label{Eq:beta}
&\beta(\tilde{\theta})-\beta(\underline{\theta}) =\\
\nonumber
&\int_{\underline{\theta}}^{\theta}	 [\frac{d}{d \hat{\theta}}[(\alpha(\hat{\theta})-p)\chi(\theta,\alpha(\hat{\theta}))+\theta\pi(\chi(\theta,\alpha(\hat{\theta})))]]\Big|_{\substack{
		{\hat{\theta}=\tilde{\theta}}\\
		{{\theta}=\tilde{\theta}}
}}d \tilde{\theta}.
\end{align}
Considering $U(\underline{\theta},\underline{\theta}) = 0$ from Proposition \ref{Lem1} and
the integration by parts
, \eqref{Eq:beta} can be rewritten as follows
\begin{align}
\nonumber
\beta(\tilde{\theta})= &\int_{\underline{\theta}}^{\tilde{\theta}}\big[\frac{d}{d {\theta}}[(\alpha(\hat{\theta})-p)\chi(\theta,\alpha(\hat{\theta}))+\theta\pi(\chi(\theta,\alpha(\hat{\theta})))]\big]\Big|_{\hat{\theta}=\theta}d \theta\\
\label{IC_form_dd}
&-[(\alpha(\tilde{\theta})-p)\chi(\tilde{\theta},\alpha(\tilde{\theta}))+\tilde{\theta}\pi(\chi(\tilde{\theta},\alpha(\tilde{\theta})))].
\end{align}
Considering the definition of $K$, equation \eqref{IC_form_dd} can be rewritten as follows which completes the proof.
\begin{align}
\label{def_beta}
&\beta(\tilde{\theta})=   \int_{\underline{\theta}}^{\tilde{\theta}}K_{\theta}(\chi(\theta,\alpha(\hat{\theta})),\theta,\alpha(\hat{\theta}))\Big|_{\hat{\theta}=\theta}d\theta-\\
\nonumber
&K(\chi(\tilde{\theta},\alpha(\tilde{\theta})),\tilde{\theta},\alpha(\tilde{\theta})) .
\end{align}
\end{proof}
%
%
%
Substituting $\beta(\tilde{\theta})$ from \eqref{def_beta} into the cost function of \eqref{opt_2_2} results in an optimization with double integrals. In order to simplify and solve this optimization problem, we make the following assumption.
\begin{definition}
	\label{def:hazard}
	As is customary in the literature \cite{nisanalgorithmic}, $h(\theta) \equiv f(\theta) / (1-F(\theta))$ denotes as the hazard rate  of type $\theta$.
\end{definition}
\begin{assumption}
	\label{assump1}
	For each ${\theta}  \in  [\underline{\theta}, \bar{\theta}]$ , $h({\theta})$ is increasing 
	\cite{nisanalgorithmic}. 
\end{assumption}
\begin{proposition}
	\label{TH2}
	Optimization \eqref{opt_2_2} is equivalent to
	\begin{subequations}
	\label{opt_3_3}
\begin{align}
			\label{opt_3_1}
			& \max_{ \alpha(\hat{\theta})} \int_{\underline{\theta}}^{\bar{\theta}} \bigg[g(\chi(\hat{\theta},\alpha(\hat{\theta}))) -[K_{\theta}(\chi(\theta,\alpha(\hat{\theta})),\theta,\alpha(\hat{\theta}))]\Bigg|_{\substack{
									{{\theta}=\hat{\theta}}
			}}  \frac{1 }{h(\hat{\theta})} \\
			\nonumber
			&+  \hat{\theta}\pi(\chi(\hat{\theta},\alpha(\hat{\theta}))) -p\chi(\hat{\theta},\alpha(\hat{\theta})) \bigg]f(\hat{\theta})d\hat{\theta}\\
   			\label{const_1}
			&s.t. \; 
			 \dot{\alpha }({\hat{\theta}}) \ge 0\\
			\label{const_2}
			&\chi(\theta,\alpha(\hat{\theta}))=\Gamma(\frac{p-\alpha(\hat{\theta})}{\theta}).
		\end{align}
			\end{subequations}
\end{proposition}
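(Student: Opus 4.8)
The plan is to prove the equivalence by eliminating $\beta$ from problem \eqref{opt_2_2}. Since the constraint \eqref{c2_2} pins $\beta(\hat{\theta})$ down as an explicit functional of $\alpha$, every feasible pair $(\alpha,\beta)$ of \eqref{opt_2_2} is in one-to-one correspondence with a feasible $\alpha$ of \eqref{opt_3_3}: both problems impose on $\alpha$ only the monotonicity $\dot{\alpha}\ge 0$ and the best-response relation $\chi=\Gamma((p-\alpha)/\theta)$, i.e. \eqref{c1_2}$=$\eqref{const_1} and \eqref{c3_2}$=$\eqref{const_2}. Hence it suffices to substitute \eqref{def_beta} into the objective and show that it reduces exactly to the integrand of \eqref{opt_3_1}, which proves matching objective values along this correspondence.

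First I would expand the objective. Under the truthful reporting $\hat{\theta}=\theta$ guaranteed by the IC reformulation, and using $V=g(\chi)-\alpha(\hat{\theta})\chi-\beta(\hat{\theta})$ from \eqref{U_CP} together with the linear reward form, the expected utility reads $\mathbb{E}_{\theta}[V]=\int_{\underline{\theta}}^{\bar{\theta}}\big[g(\chi)-\alpha(\theta)\chi-\beta(\theta)\big]f(\theta)\,d\theta$. I would then use $K=\alpha\chi+\theta\pi(\chi)-p\chi$ to write $\alpha\chi=K-\theta\pi(\chi)+p\chi$, so that the integrand becomes $g(\chi)+\theta\pi(\chi)-p\chi-K-\beta(\theta)$. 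Substituting $\beta(\theta)$ from \eqref{def_beta}, the explicit $-K$ term is cancelled exactly by the $+K$ contained in $\beta$, leaving $g(\chi)+\theta\pi(\chi)-p\chi-\int_{\underline{\theta}}^{\theta}K_{\theta}\big|_{\hat{\theta}=s}\,ds$. The active boundary condition $U(\underline{\theta},\underline{\theta})=0$ from Proposition \ref{Lem1} is already built into \eqref{def_beta}, so no separate boundary term survives.

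The key manipulation is then a change of order of integration applied to the double integral $\int_{\underline{\theta}}^{\bar{\theta}}\big(\int_{\underline{\theta}}^{\theta}K_{\theta}|_{\hat{\theta}=s}\,ds\big)f(\theta)\,d\theta$. Interchanging the order over the triangle $\underline{\theta}\le s\le\theta\le\bar{\theta}$ integrates $f$ against the inner variable and collapses it into the survival function, giving $\int_{\underline{\theta}}^{\bar{\theta}}K_{\theta}|_{\hat{\theta}=s}\,(1-F(s))\,ds$. Invoking the hazard-rate notation of Definition \ref{def:hazard}, I replace $1-F(s)=f(s)/h(s)$, rename $s\to\hat{\theta}$, and combine with the remaining single integral (relabelling the truthful $\theta$ as $\hat{\theta}$). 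This reproduces precisely the integrand of \eqref{opt_3_1}, with $1/h(\hat{\theta})$ multiplying $K_{\theta}|_{\theta=\hat{\theta}}$, completing the reduction and hence the equivalence.

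I expect the main obstacle to be bookkeeping rather than any genuine difficulty: correctly tracking the ``evaluate at $\hat{\theta}=\theta$'' convention through the substitution, ensuring the $K$ terms cancel cleanly, and justifying the interchange of integration order (which needs only integrability of $K_{\theta}f$ on the triangle, guaranteed by the smoothness and monotonicity of $\chi$ established in Proposition \ref{lambda_def} and Remark \ref{rem1}). I would also note that Assumption \ref{assump1} on the monotonicity of $h$ is not itself required for this algebraic equivalence; only the \emph{definition} of $h$ enters here, while the increasing-hazard-rate property is reserved for the subsequent convexity and solvability analysis of \eqref{opt_3_3}.
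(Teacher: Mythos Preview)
Your proposal is correct and follows essentially the same route as the paper: substitute \eqref{def_beta} for $\beta$ in the objective, cancel the $K$ term against $-\alpha\chi$ via the definition of $K$, and collapse the resulting double integral into a single integral weighted by $(1-F)/f=1/h$. The only cosmetic difference is that the paper labels the reduction step ``integration by parts'' (with $dv=f(\tilde\theta)\,d\tilde\theta$, $v=F(\tilde\theta)$) whereas you phrase it as a change of order of integration over the triangle---the two are equivalent here---and your side remark that Assumption~\ref{assump1} plays no role in this purely algebraic equivalence is accurate.
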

\begin{proof}
	Replacing $\beta(\hat{\theta})$ from \eqref{c2_2} in the utility function of task publisher, we have
	\begin{align}
		\label{opt_4}
	&	\mathbb{E}_{\theta}[V]= \int_{\underline{\theta}}^{\bar{\theta}} \Bigg[g(\chi(\hat{\theta},\alpha(\hat{\theta})))-\alpha(\hat{\theta})\chi(\hat{\theta},\alpha(\hat{\theta}))\\
		\nonumber
		&	-\int_{\underline{\theta}}^{\hat{\theta}}[K_{\theta}(\chi(\theta,\alpha(\hat{\theta})),\theta,\alpha(\hat{\theta}))]\Big|_	{\hat{\theta}={\theta}}d{\theta}\\
  \nonumber
&	+K(\chi(\theta,\alpha(\hat{\theta})),\theta,\alpha(\hat{\theta}))\Big|_{\substack{
			{\theta={\hat{\theta}}}
	}})\Bigg]f(\hat{\theta})d\hat{\theta}.
	\end{align}
	The integration by parts of the term
	$	\int_{\underline{\theta}}^{\bar{\theta}} \Big[\int_{\underline{\theta}}^{\tilde{\theta}}K_{\theta}(X(\theta,\alpha(\hat{\theta})),\theta,\alpha(\hat{\theta}))\Big|_	{\hat{\theta}={\theta}}d{\theta}\Big]f(\tilde{\theta})d\tilde{\theta}$ gives
	\begin{align}
		\label{hazard_1}
		&\int_{\underline{\theta}}^{\bar{\theta}} \Big[\int_{\underline{\theta}}^{\tilde{\theta}}K_{\theta}(\chi(\theta,\alpha(\hat{\theta})),\theta,\alpha(\hat{\theta}))\Big|_	{\hat{\theta}={\theta}}d{\theta}\Big]f(\tilde{\theta})d\tilde{\theta} =\\
		\nonumber &\int_{\underline{\theta}}^{\bar{\theta}} [K_{\theta}(\chi(\theta,\alpha(\hat{\theta})),\theta,\alpha(\hat{\theta}))]\Big|_{\substack{
							{{\theta}=\hat{\theta}}
		}}\frac{1-F(\tilde{\theta}) }{f(\tilde{\theta})} f(\tilde{\theta})d\tilde{\theta}.
	\end{align}
	Considering the definition of $h(t)$  in Definition \ref{def:hazard}, \eqref{hazard_1} can be rewritten as cost function \eqref{opt_3_1} which completes the proof.

\end{proof}
In what follows, we investigate the solution of optimization \eqref{opt_3_3} by rewriting it as an optimal control problem. We also add the following assumption on the derivative of function  $\alpha$.
\begin{assumption}
	\label{upper_bound}
	There exists $\bar u>0$ such that $\forall \hat{\theta}  \in  [\underline{\theta}, \bar{\theta}]$, $\dot{\alpha}(\hat{\theta})\le \bar u$. 
\end{assumption}
Optimization \eqref{opt_3_3} can be rewritten in the form of an optimal control problem with state variable $\alpha(\hat{\theta})$ and control variable $u(\hat{\theta})$ as follows
	\begin{subequations}
	\label{opt_4_4}
		\begin{align}
\label{opt_4_1}
& \max_{ \alpha(\hat{\theta})} \int_{\underline{\theta}}^{\bar{\theta}}V_{sp} d\hat{\theta}\\
\label{const_11}
s.t.\; \; \;  &\dot{\alpha}  ({\hat{\theta}}) = u(\hat{\theta})\\
\label{const_22}
& u(\hat{\theta}) \in \mathbb{U} := [0,\bar{u}]\\
\label{const_33}
&\chi(\theta,\alpha(\hat{\theta}))=\Gamma(\frac{p-\alpha(\hat{\theta})}{\theta})
\end{align}
	\end{subequations}
where $V_{sp} =\big[g(\chi(\hat{\theta},\alpha(\hat{\theta}))) -[K_{\theta}(\chi(\theta,\alpha(\hat{\theta})),\theta,\alpha(\hat{\theta}))]\Big|_{\substack{
		{{\theta}=\hat{\theta}}
}}  \frac{1 }{h(\hat{\theta})} +  \hat{\theta}\pi(\chi(\hat{\theta},\alpha(\hat{\theta}))) -p\chi(\hat{\theta},\alpha(\hat{\theta})) \big]f(\hat{\theta})$. We introduce the Hamiltonian function as follows \cite{bertsekas2012dynamic}
	\begin{align}
\label{hamiltonian}
H(\alpha,u,\lambda)=V_{sp}(\alpha)+\lambda u
\end{align}
where $\lambda$ is a Lagrange multiplier. Next, we present a proposition that provides both necessary and sufficient optimality conditions for \eqref{opt_4_4} and then, a numerical algorithm proposed in Algorithm \ref{alg1} to find the solution.

\begin{proposition}
The control and state functions $u(\hat{\theta})$,  $\alpha(\hat{\theta})$ are the solution of the optimization \eqref{opt_4_4} if and only if the following conditions which are known as Minimum Principle are met.
		\begin{align}
	\label{cond1}
	& \dot{ \alpha} ({\hat{\theta}}) =\frac{\partial }{\partial \lambda}H(\alpha,u,\lambda)\\
	\label{cond2}
	& \dot{\lambda} ({\hat{\theta}}) =\frac{\partial }{\partial \alpha}H(\alpha,u,\lambda)\\
	\label{cond3}
	& u(\hat{\theta}) = \arg \min_{u\in \mathbb{U}} H(\alpha,u,\lambda)\\
	& \lambda(\bar{\theta})=0. 
	\end{align}
\end{proposition}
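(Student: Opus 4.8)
The plan is to prove the two directions of the claimed equivalence separately: that any optimal pair $(\alpha(\hat\theta),u(\hat\theta))$ for \eqref{opt_4_4} must satisfy \eqref{cond1}--\eqref{cond3} together with $\lambda(\bar\theta)=0$ (necessity of the conditions), and conversely that any admissible pair satisfying these conditions is globally optimal (sufficiency of the conditions). Since \eqref{opt_4_4} is a maximization, I would first pass to the equivalent minimization of $-\int_{\underline{\theta}}^{\bar{\theta}} V_{sp}\,d\hat\theta$, so that the standard minimum-principle machinery of \cite{bertsekas2012dynamic} applies to the Hamiltonian \eqref{hamiltonian} without modification.

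For necessity, I would invoke Pontryagin's minimum principle for a fixed-horizon problem with the compact control set $\mathbb{U}=[0,\bar u]$ (compactness supplied by Assumption \ref{upper_bound}) and a free terminal state $\alpha(\bar\theta)$. The regularity required---continuous differentiability of the running payoff $V_{sp}$ in $\alpha$ and of the dynamics in $(\alpha,u)$---follows from Assumption \ref{as:conc} (smooth, strictly concave $\pi,g$), the smoothness of $\Gamma$ and hence of $\chi(\hat\theta,\alpha)=\Gamma((p-\alpha)/\hat\theta)$ in \eqref{const_33}, and Assumption \ref{assump1} (so that $1/h$ is well defined and smooth). The canonical Hamiltonian system then yields the state equation \eqref{cond1}, the costate equation \eqref{cond2}, the pointwise Hamiltonian-minimization condition \eqref{cond3}, and, because $\alpha(\bar\theta)$ is unconstrained, the transversality condition $\lambda(\bar\theta)=0$. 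A useful observation at this stage is that $V_{sp}$ depends on the state $\alpha$ only and not on $u$, so $H$ is affine in $u$ and \eqref{cond3} collapses to a bang--bang rule governed by the sign of the switching function $\lambda$.

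For sufficiency, I would apply the Mangasarian sufficiency theorem: if $H(\alpha,u,\lambda)$ is jointly concave in $(\alpha,u)$ along the candidate costate, then the necessary conditions become sufficient for a global optimum. Because $H$ is affine (hence concave) in $u$, this reduces to showing that $V_{sp}(\alpha,\hat\theta)$ is concave in $\alpha$ for each fixed $\hat\theta$. I would establish this by differentiating $V_{sp}$ twice in $\alpha$ through the composition $\chi=\Gamma((p-\alpha)/\hat\theta)$, using $\partial\chi/\partial\alpha>0$ from Remark \ref{rem1}, the concavity and monotonicity of $g$ and $\pi$ from Assumption \ref{as:conc}, and the monotone hazard rate from Assumption \ref{assump1} to sign the contribution of the term carrying $K_\theta/h(\hat\theta)$.

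I expect the verification of concavity of $V_{sp}$ in $\alpha$ to be the main obstacle, since it is the only step not obtained by direct citation of a standard theorem: it mixes the curvature of $g$ and $\pi$ with that of the inverse map $\Gamma$ and with the hazard-rate factor. Should joint concavity of $H$ fail globally, I would fall back on the weaker Arrow sufficiency condition, namely concavity in $\alpha$ of the maximized Hamiltonian $H^{0}(\alpha,\lambda)=\max_{u\in\mathbb{U}}H$; because $u$ and $\alpha$ decouple in $H$, this again reduces to concavity of $V_{sp}$ in $\alpha$, so the crux remains the same curvature estimate underwritten by Assumptions \ref{as:conc} and \ref{assump1}.
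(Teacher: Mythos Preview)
Your proposal is correct and follows essentially the same route as the paper: necessity via Pontryagin's principle and sufficiency via the Mangasarian-type concavity criterion (linear dynamics, convex $\mathbb{U}$, concave $V_{sp}$), which the paper obtains by a one-line citation to \cite{bertsekas2012dynamic}. The only difference is that you explicitly flag the concavity of $V_{sp}$ in $\alpha$ as the step needing verification, whereas the paper simply asserts it.
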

\begin{proof}
	As shown in \cite[chapter~3 ]{bertsekas2012dynamic}, since $\dot{\alpha}$ is the linear function of $\alpha$ and $u$ and also $V_{sp}$ is a concave function and $\mathbb{U}$ is a convex set,  the conditions of the minimum principle are both necessary and sufficient for optimality. 
\end{proof}
We consider Gradient Projection Algorithm explained in Algorithm \ref{alg1} to solve our optimal control problem.
In \cite[Proposition~2.4 and Lemma~2.5 ]{preininger2018convergence}, it is shown that if $\sum_{i=0}^{\infty}\gamma^i = \infty$,   and $\lim\limits_{i\to\infty}\gamma^i = 0$, where $\gamma^i$ is the learning rate, then the sequence ${u^i}$ in Algorithm. \ref{alg1} converges to optimal control function.
\begin{algorithm}
	\caption{The Gradient Projection Algorithm for solving optimal control problem that the task publisher is faced to design optimal mechanism}
	\begin{algorithmic}[1]
			\label{alg1}
		\STATE  Select a discrete intial approximation for control variable $u^{0}{(\hat{\theta})}$.
		\STATE Using the control variable $u^i(\hat{\theta})$, solve differential equation \eqref{cond1} with initial condition $\alpha(\underline{\theta})=\alpha_0$ and calculate $\alpha^{i}(\hat{\theta})$. 
		\STATE Using the control variable $u^{i}(\hat{\theta})$ and $\alpha^{i}(\hat{\theta})$ from previous steps , solve differential equation \eqref{cond2} with initial condition $\lambda(\bar{\theta})=0$ and calculate $\lambda^{(i)}(\hat{\theta})$.
		\STATE Compute $u^{i+1}=P_U(u^i-\gamma^i \frac{\partial H}{\partial u^i})$.
		\STATE  If $u^{i+1}=u^{i}$ then stop. Otherwise, replace $i$ with $i+1$ and go to step 2.
	\end{algorithmic}
\end{algorithm}
\section{illustrative example}
\label{Sec:Simulation}
In this section, we evaluate the performance of the proposed mechanism. We assume that the types of agents are uniformly distributed within the interval $[4,6]$. The cost of completing one unit of a task is set at $p=10$. The functions $\pi(x)=\frac{z_2}{1-z_1}x^{1-z_1}$ and $g(x)=\frac{1}{1-q_1}x^{1-q_1}$ are used, with the values $q_1=0.5$, $z_1=0.5$, and $q_2=3$ assigned. 
The initial control variable is set to $u^0(\hat{\theta})=0.5$ for all $\hat{\theta}\in [\underline{\theta}, \bar{\theta}]$, and the learning rate is $\gamma^i=0.01$. Using these functions and values, we can calculate $\chi$ using \eqref{d11} as $\chi(\theta,\alpha(\hat{\theta}))=(\frac{p-\alpha(\hat{\theta})}{\theta})^{-\frac{1}{z_1}}$. The convergence of Algorithm \ref{alg1} is depicted in Figure \ref{Fig:alg}, which shows that $\alpha(\hat{\theta})$ converges after approximately 300 iterations.

To demonstrate the validity of the IR and IC constraints in the proposed scheme, the utilities of five specific agent types ($\theta=4,4.5,5,5.5,6$) when they announce different types are shown in Figure\ref{Fig:type}. The black stars on the curve represent the points at which each type of agent obtains maximum utility. Figure \ref{Fig:type} illustrates that when an agent truthfully announces its type, it receives a positive and maximal utility. However, when the agent deceives the task publisher by announcing a false type, it incurs a loss. These results confirm that 
the proposed mechanism satisfies the IR and IC constraints.
  \begin{figure}[h!]
	\centering
 \vspace*{-3mm}
\includegraphics[width=0.95\linewidth]{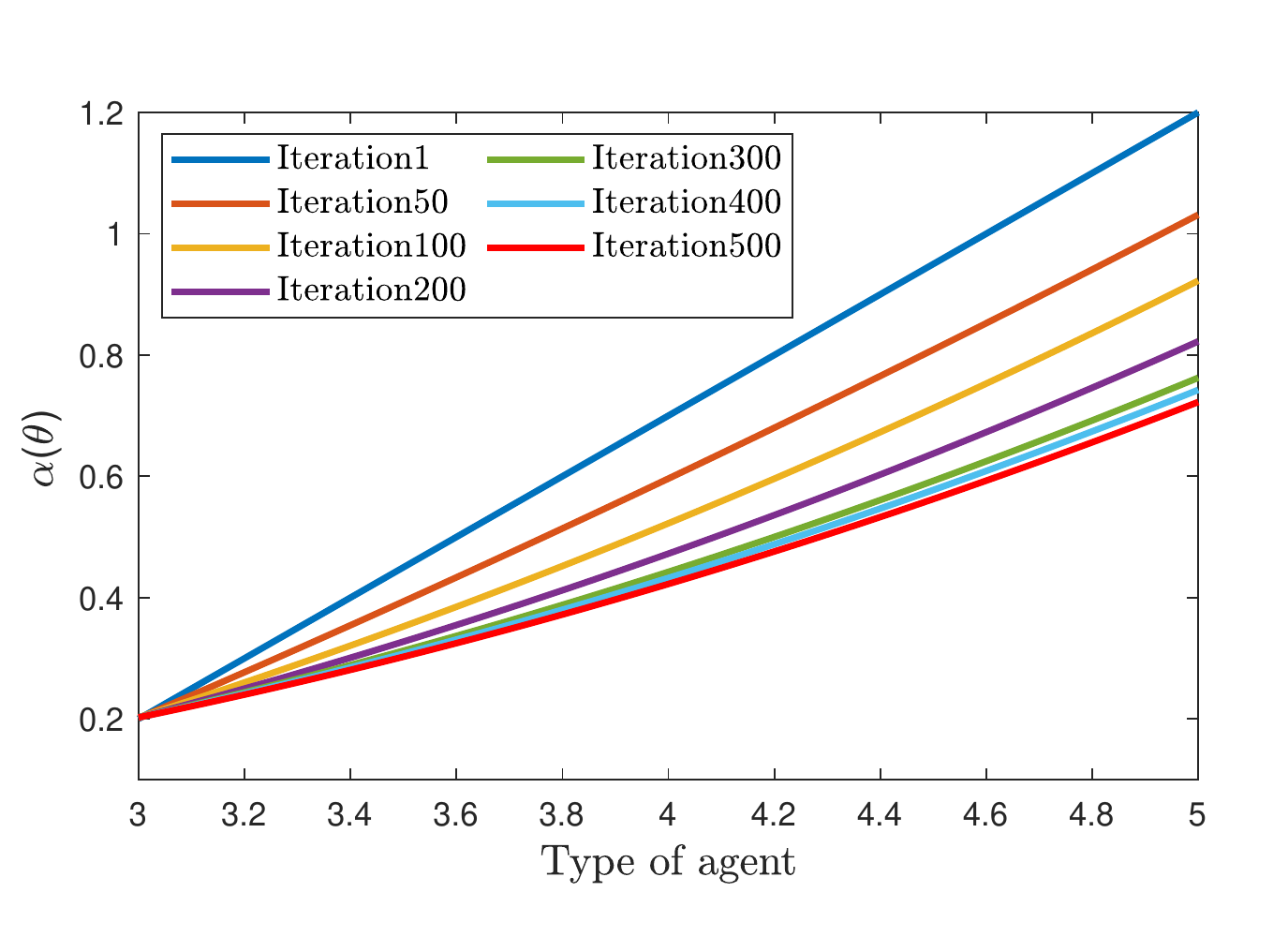}
	\caption{ $\alpha(\hat{\theta})$ allocated to different types of agents in different iterations of Algorithm \ref{alg1}.}
	\label{Fig:alg}
\end{figure}
  \begin{figure}[h!]
	\centering
 \vspace*{-3mm}
\includegraphics[width=0.95\linewidth]{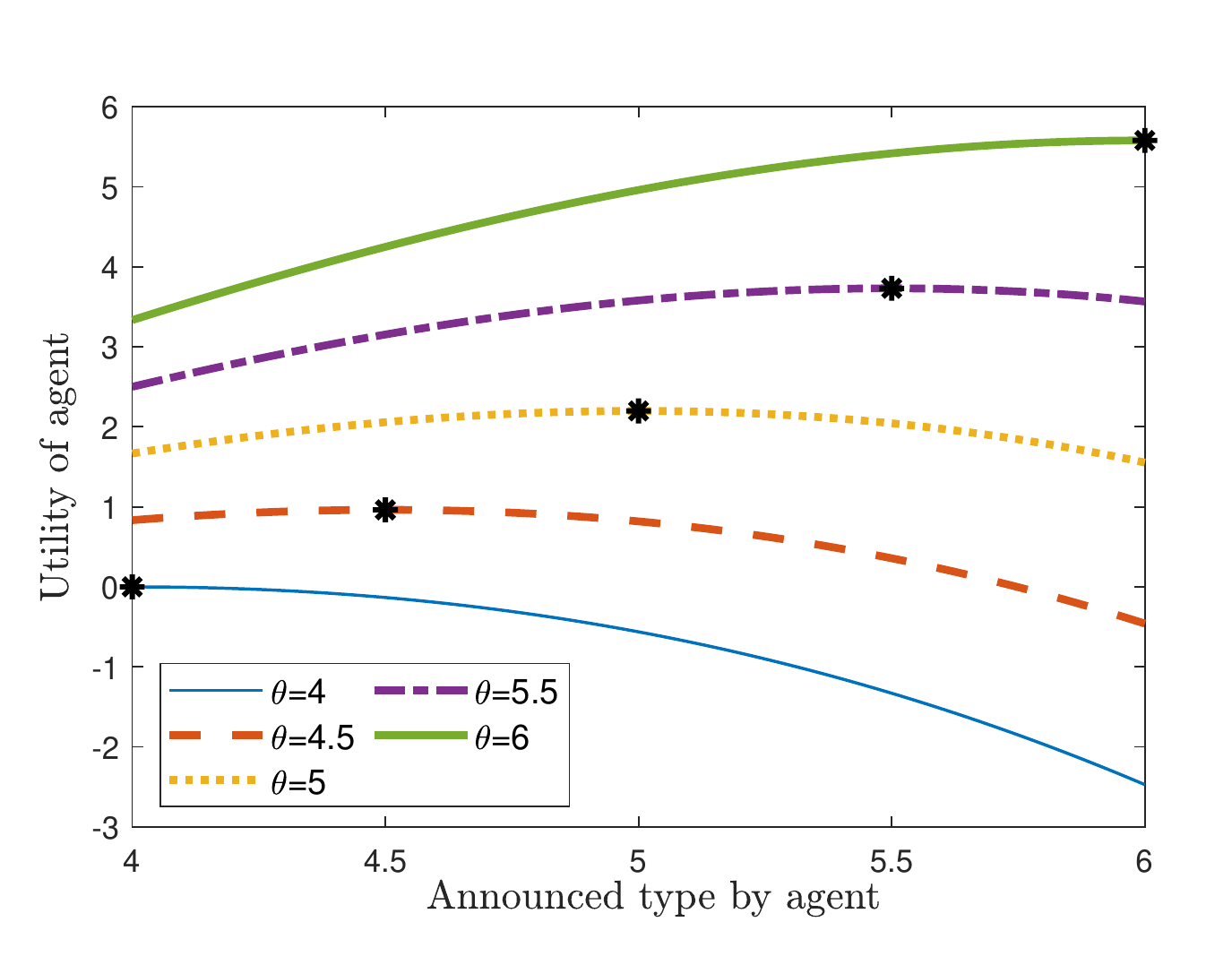}
	\caption{ Utilities of different agents 
		when announcing different types.}
	\label{Fig:type}
	\vspace*{-3mm}
\end{figure}

\section{Conclusion} 
This paper presented a new profit maximization mechanism for task allocation under incomplete information about the utilities of autonomous agents. The proposed mechanism allows the task publisher to maximize its utility and simultaneously, ensures both truthful reporting of the agents' private information and allows autonomous agents to decide on their own participation levels.
{We formulated the optimal truthful mechanism as a nonconvex functional optimization problem. By establishing a relation between the decision functions of the task publisher, we found an equivalent representation of the incentive constraint, which transformed the nonconvex optimization into a tractable convex optimal control problem. }

\label{Sec:Conclusion}

 \bibliographystyle{IEEEtran}
%
\bibliography{generic-color-brief.bib}

\end{document}